\title{On Coding for an Abstracted Nanopore Channel for DNA Storage}
\author{Reyna Hulett\thanks{
Computer Science Department, Stanford University.  \texttt{rmhulett@stanford.edu}.
RH's research supported in part by a NSF Graduate Research Fellowship under grant DGE-1656518. },
\ Shubham Chandak\thanks{
Electrical Engineering Department, Stanford University.  \texttt{schandak@stanford.edu}.},
\ and Mary Wootters\thanks{
Computer Science and Electrical Engineering Departments, Stanford University.  \texttt{marykw@stanford.edu}.
RH and MW's research supported in part by NSF grants CCF-1844628 and SemiSynBio-1807371.}
}
\date{January 2021}
\pgfplotsset{width=10cm,compat=1.9}
\newtheorem{theorem}{Theorem}
\newtheorem{lemma}[theorem]{Lemma}
\theoremstyle{definition}
\newtheorem{definition}{Definition}
\newtheorem{remark}[theorem]{Remark}
\newcommand{\mkw}[1]{\textcolor{blue}{\textbf{[#1 --mary]}}}
\newcommand{\rh}[1]{\textcolor{green}{\textbf{[#1 --reyna]}}}
\newcommand{\TODO}[1]{\textcolor{red}{\textbf{[TODO: #1]}}}
\renewcommand{\mkw}[1]{} 
\renewcommand{\rh}[1]{} 
\renewcommand{\TODO}[1]{}
\newcommand{\eps}{\epsilon}
\begin{document}

\maketitle

\begin{abstract}
In the emerging field of DNA storage, data is encoded as DNA sequences and stored.  The data is read out again by sequencing the stored DNA.  Nanopore sequencing is a new sequencing technology that has many advantages over other methods; in particular, it is cheap, portable, and can support longer reads.  
While several practical coding schemes have been developed for DNA storage with nanopore sequencing, the theory is not well understood.  Towards that end,
we study a highly abstracted (deterministic) version of the nanopore sequencer, which highlights key features that make its analysis difficult.  We develop methods and theory to understand the capacity of our abstracted model, and we propose efficient coding schemes and algorithms.
\end{abstract}

\section{Introduction}
In the emerging field of \em DNA storage, \em data is encoded as DNA sequences and stored; the data can be read back by sequencing the stored DNA. 
This technology promises high storage density and stability, as well as efficient duplication of data and random access using PCR-based technologies; we refer the reader to \cite{ceze2019} and the references therein for an excellent overview. 

Both the synthesis and sequencing processes are noisy, and as a result the data must be encoded before the synthesis stage to ensure accurate data recovery.
Prior work has studied methods for encoding data in order to protect it against (aspects of) the noise introduced by these processes, for example~\cite{goldman2013,blawat2016,erlich,YGM17,organick2018,lee2019,lopez2019,chandak2020}. 

In this work, we focus on one particular stage of this noisy process, the \em nanopore sequencer. \em  
Nanopore sequencing---and in particular the MinION sequencer developed by Oxford Nanopore Technologies~\cite{jain2016oxford}---is an emerging sequencing technology.  While initial works in DNA 
storage used Illumina sequencing, nanopore sequencing has been attracting interest due to its portability, low cost, and ability to support significantly longer reads than Illumina.

At a high level,
the nanopore sequencer works as follows.  A single strand of DNA is passed through a pore, leading to variations in a current readout.  
The pore can hold $k$ nucleotides (for our purposes, a nucleotide is just a value in $\{A,C,G,T\}$) at a time; in practice $k$ is about six.  
The value of the current readout depends on which nucleotides are in the pore.  For example, if the strand of DNA is AGCTAAT, and the pore sees the sub-strand AGCT, it will output one current reading.  As the strand is passed through the pore, the contents of the pore will shift, say from AGCT to GCTA, then to CTAA, then to TAAT, and so on.  This will result in a change in the current reading, according to some function $f$ that maps $k$-mers to current levels.
The process is depicted in Figure~\ref{fig:pore}.  Given the current readout, the goal is to recover the original DNA sequence.

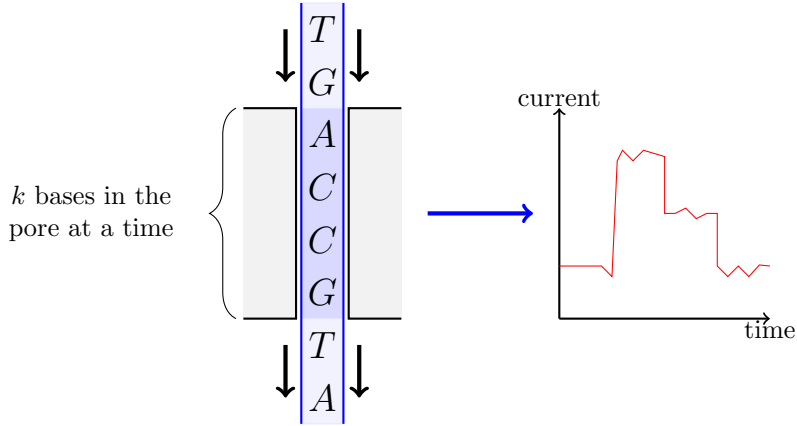
\begin{figure}
\begin{center}
\begin{tikzpicture}[scale=.7]

\draw[thick,fill=gray!10] (0,0) -- (1,0) -- (1,-4) -- (0,-4);
\draw[thick,fill=gray!10] (3,0) -- (2,0) -- (2,-4) -- (3,-4);
\draw [decorate,decoration={brace,amplitude=10pt},xshift=-4pt,yshift=0pt]
(0,-4) -- (0,0) node [black,midway,xshift=-0.3cm,anchor=east] {\begin{minipage}{3cm}\begin{center}$k$ bases in the \\pore at a time\end{center}\end{minipage}};
\draw[blue!10, fill=blue!5] (1.1, 2) rectangle (1.9, -6);
\draw[blue!10, fill=blue!15] (1.1, 0) rectangle (1.9, -4);
\draw[thick, blue] (1.1,2) to (1.1,-6);
\draw[thick,blue] (1.9, 2) to (1.9, -6);
\node at (1.5, 1.5) {\Large $T$};
\node at (1.5, .5) {\Large $G$};
\node at (1.5, -.5) {\Large $A$};
\node at (1.5, -1.5) {\Large $C$};
\node at (1.5, -2.5) {\Large $C$};
\node at (1.5, -3.5) {\Large $G$};
\node at (1.5, -4.5) {\Large $T$};
\node at (1.5, -5.5) {\Large $A$};
\draw[ultra thick,->] (.8, 1.5) to (.8, .5);
\draw[ultra thick,->] (2.2, 1.5) to (2.2, .5);
\draw[ultra thick,->] (.8, -4.5) to (.8, -5.5);
\draw[ultra thick,->] (2.2, -4.5) to (2.2, -5.5);

\draw[ultra thick, blue,->] (3.5, -2) -- (5.5,-2);

\begin{scope}[scale=2, xshift=3cm,yshift=-2cm]

\draw[->, thick] (0,0) to (2,0);
\draw[->, thick] (0,0) to (0,2);
\draw[red] (0,.5) -- (0.4,.5) -- (0.5, .4) -- (0.55, 1.5) -- (0.6, 1.6) -- (0.7, 1.5) -- (0.8, 1.6) -- (1,1.54) --(1,1)-- (1.1, 1) -- (1.2, 1.05) -- (1.3, .95) -- (1.4, 1) -- (1.5, 1) -- (1.5, .5) -- (1.6, .4) -- (1.7, .5) -- (1.8, .4) -- (1.9, .51) -- (2,.5);
\node at (2, -.1) {time};
\node at (0,2.1) {current};
\end{scope}
\end{tikzpicture}
\end{center}
\caption{High-level view of the nanopore sequencer.}\label{fig:pore}
\end{figure}

This channel is difficult to analyze, for several reasons.  
First, the output at any given time depends on $k > 1$ bases, and so 
there is inter-symbol interference.
Second, there may be collisions in the output: two different pore contents may lead to similar current readouts.  Third, the current readout can be noisy.  Fourth, the amount of time that each $k$-mer spends in the pore can vary, and sometimes never occur at all, leading to synchronization errors in the output.

Due to this complexity, typical basecallers (that is, methods for recovering the original sequence from the current readouts) rely on machine learning techniques~\cite{wick2019,flappie,scrappie}.  This is effective in practice, but difficult to get a theoretical handle on. 
While there are several practical approaches to 
error correction for nanopore sequencing in DNA storage~\cite{YGM17,organick2018,lee2019,lopez2019,chandak2020},
the theoretical limits of this channel are not well understood.  The work~\cite{MDK18}, discussed more below, proposed a probabilistic model of the nanopore sequencer and developed bounds on the capacity of the resulting channel.  However, this problem is quite difficult; in particular, taking into account the possibility of synchronization errors places this problem as more difficult than determining the capacity of the deletion channel, a notorious open problem.

In this work, we take a different approach to developing a theoretical understanding of the nanopore sequencer.  We develop a highly abstracted, \em deterministic \em model, which is meant to highlight the first two
sources of noise mentioned above: the inter-symbol interference, and the prospect of collisions when multiple $k$-mers lead to similar current outputs.
We develop methods and theory to understand the capacity of our abstracted channel, and we propose efficient coding schemes and algorithms.

Our goal in this work is to open up a research direction towards a theoretical understanding of the nanopore sequencer. 
We fully acknowledge that our work is not yet practical; in particular our abstracted channel does not include noise in the current readings or synchronization errors that can arise from variable pore dwelling times.  It is our hope that a solid understanding of our abstracted channel can then be combined with (much more well-studied) theories of coding for substitution and synchronization errors, in order to make progress on a more realistic channel model.

\textbf{Contributions.} Our contributions are as follows.
\begin{enumerate}
\item We propose a novel abstraction of the nanopore sequencer, which highlights the inter-symbol interference and the prospect of collisions.
This abstraction is simple enough that it is tractable, yet complex enough that it (a) captures some fundamental properties of the nanopore sequencer, and (b) already gives rise to extremely interesting problems from a theoretical perspective.
We hope that our abstraction will lead to future work in this area.
\item In Section~\ref{sec:compute}, we develop an algorithm to determine the capacity of this channel.  The algorithm is inefficient as pore size $k$ grows, but we can use it to determine the capacity for small $k$.
\item  In Section~\ref{sec:bound}, we develop simple bounds on the ``best'' and ``worst'' capacity for an arbitrary pore size $k$ (where ``best'' and ``worst'' refers to the choice of the map from $k$-mers to current readouts).  We use our algorithms mentioned above to compare these bounds to the exact values for $k=2$. These preliminary results suggest there may be some intriguing ``bumpiness'' in the worst-case capacity as the number of distinct current levels increases, but that the average-case is closer to best- than worst-case. 
\item In Section~\ref{sec:code}, we develop efficient coding schemes for our abstracted channel.  In particular, we develop a scheme that achieves rate $C_f - \epsilon$, where $C_f$ is the capacity, that requires preprocessing time $O\left( \frac{1}{\eps} \cdot 4^{1/\eps} \right)$, and has encoding/decoding time $O(n)$,
where the $O(\cdot)$ notation hides a constant that depends on $f$.
We also present an efficient coding scheme that improves over the ``naive'' one with high probability, where the probability is over a random map from $k$-mers to current readouts.
\end{enumerate}

\subsection{Related Work}
DNA storage has been around as an idea since the 1960's, and there has been renewed interest in it in the past decade, starting with the works~\cite{church, goldman2013}.
We refer the reader to \cite{ceze2019} for an excellent survey on DNA storage.  
Most works on DNA storage have focused on other sequencing technologies like Illumina, but
starting with the work of \cite{YGM17}, there have been several works which developed practical DNA storage systems for nanopore sequencers, including \cite{YGM17, organick2018, lee2019, lopez2019, chandak2020}.  All of these works developed practical coding schemes for DNA storage with a nanopore sequencer, but did not explicitly model the sequencer or analyze the theoretical limits.  

Perhaps the work most related to ours is that of \cite{MDK18}, who also developed a model of the nanopore sequencer and studied the capacity of their model.
In particular, they give a multi-letter capacity formula for their channel, and derive computable bounds for the capacity, in terms of a Markov transition matrix $P$ that captures the probability of transitioning from one $k$-mer to the next in the pore.
Our work complements that work by focusing on different aspects of the problem.
In more detail, that 
work differs from ours in several ways.
First, the model in \cite{MDK18} is stochastic, while ours is deterministic.  As a result, they take an information-theoretic approach, while our approach is more combinatorial in nature.
Second, their model includes the possibility that $k$-mers might get dropped; in particular, the binary deletion channel appears as one part of their model.  Since understanding the capacity of the binary deletion channel is a difficult open problem, this makes their problem extremely difficult. In contrast, we ignore this aspect in order to more cleanly focus in on the effects of the inter-symbol interaction and the potential confusability between the $k$-mers.  
Third, that work focuses on the nanopore sequencer for general applications (not necessarily for DNA storage), and in particular does not consider efficient coding schemes for their model.  Finally, that work derives bounds on the channel capacity for a particular choice of (a stochastic analog of) the map $f$ from $k$-mers to current levels, derived from experimental data.  In contrast, we are interested in results for any $f$, and in particular for the best and worst such functions $f$.  While the former direction is obviously of immediate interest for existing technology, it is our hope that understanding how the capacity of the channel changes with $f$ could perhaps guide how nanopore technology is developed in the future.  We note that this is still an emerging area and the technology is evolving; see~\cite{goto2020} for an overview of recent advances.

\section{Abstract Model of Nanopore Sequencer}

In this section we formalize our model.  As mentioned above, our goal is to focus on the challenges of (1) inter-symbol interference, and (2) the possibility of different $k$-mers producing similar current readouts.  With that in mind, we propose a very simple model for the nanopore sequencer.  As input, we take an encoded string $s_0 s_1 \cdots s_{n-1} \in \{A,C,G,T\}^n$.  This is transformed into a sequence of $k$-mers according to a sliding window, to obtain $(s_0 \cdots s_{k-1}), (s_1 \cdots s_k), (s_2 \cdots s_{k+1}), \ldots, (s_{n-k} \cdots s_{n-1})$.  Finally, each of these $k$-mers is mapped to one of $b$ distinct current levels, according to a mapping $f: \{A,C,G,T\}^k \to \{0,1, \ldots, b-1\}$.  This mapping $f$ defines the channel.

\begin{definition}[Abstract Nanopore Channel]
Given a mapping $f: \{A,C,G,T\}^k \to \{0,1,\dots,b-1\}$ from $k$-mers to current levels, 
let $f^*: \{A,C,G,T\}^* \to \{0,1,\dots,b-1\}^*$ represent the mapping from DNA strands to their full current readout, i.e.,
\[f^*(s_0 s_1 \cdots s_{n-1}) = f(s_0 \cdots s_{k-1}) \circ f(s_1 \cdots s_k)\circ \cdots \circ f(s_{n-k} \cdots s_{n-1}).\]
We call $f^*$ the \em abstract nanopore channel \em given by $f$.
\end{definition}

Given a mapping $f$, we are interested in the capacity $C_f$ (in bits-per-base), which we define as follows.

\begin{definition} 
Let $f: \{A,C,G,T\}^k \to \{0,1,\ldots, b-1\}$.  The capacity $C_f$ of the 
abstract nanopore channel determined by $f$ is defined as
\begin{equation}\label{eq:rf}
C_f = \lim_{n \to \infty} \frac{\log{|\{ c \in \{0,1,\dots,b-1\}^{n-k+1} \mid \exists s \in \{A,C,G,T\}^n \text{ s.t. } f^*(s)=c \}|}}{n}.
\end{equation}
\end{definition}
Observe that if $\mathcal{S}$ is a collection of 
strings $s \in \{A,C,G,T\}^n$ so that  $f^*(s)$ are all distinct for $s \in \mathcal{S}$, then by assigning a different message to each $s \in \mathcal{S}$, we can communicate perfectly (if not necessarily efficiently) across the abstract nanopore channel.

We will focus on \em balanced \em mappings $f$.  These are mappings so that $|f^{-1}(i)|$ is the same for all $i \in \{0, 1, \ldots, b-1\}$.

\section{Computing the Capacity}\label{sec:compute}

Our first contribution is an algorithm (Algorithm~\ref{alg:Rf} below) that computes  $C_f$, given $f$.  The basic idea is to consider
a finite automaton on the alphabet $\{0,1,\dots,b-1\}$ that accepts exactly those current readouts that can be generated by some DNA strand; then we use the transfer matrix method \cite{aydin2015, flajolet2009} for counting accepting paths in that finite automaton.

Formally, an Nondeterministic Finite Automaton (NFA) is a tuple $(Q, \Sigma, \Delta, Q_0, F)$ where $Q$ is the set of states, $\Sigma$ is the alphabet, $\Delta: Q \times \Sigma \to 2^Q$ is the transition function, $Q_0 \subseteq Q$ is the set of initial states, and $F \subseteq Q$ is the set of accepting states. Likewise, a Deterministic Finite Automaton (DFA) is a tuple $(Q, \Sigma, \delta, q_0, F)$, defined analogously except that $\delta: Q \times \Sigma \to Q$ is the transition function and there is only a single initial state $q_0$.

We consider the NFA $M$ and the DFA $M'$ described in Algorithm~\ref{alg:Rf}.  The NFA $M$ has states indexed by strings in $\{A,C,G,T\}^{k-1}$ and alphabet $\Sigma = \{0,1,\ldots,b-1\}$, the current levels.  Given a state $(s_0\cdots s_{k-2})$ and an input current level $i \in \Sigma$, the NFA $M$ can transition to any other state of the form $(s_1 \cdots s_{k-1})$ so that $f(s_0 s_1\cdots s_{k-1}) = i$.  All states are accepting states.  By construction, a sequence of current readings $c \in \Sigma^{n-k+1}$ can be an output $f^*(s_0s_1\cdots s_{n-1})$ of the nanopore sequencer if and only if $c$ is accepted by $M$.  The DFA $M'$ accepts exactly the same strings as $M$, and is obtained using the classic subset construction \cite{rabin1959}, such that each state of the DFA corresponds to a subset of the states of the NFA.

The \em transfer matrix method \em is a method for obtaining a \em generating function \em $g_f(z)$ 
for the number of states accepted by a given DFA.  In more detail, given the \em transfer matrix \em $T$ for the automaton (so that $T_{i,j}$ is the number of transitions from state $i$ to state $j$), the transfer matrix method gives an expression for a function $g_f(z)$ (in terms of the matrix $T$), so that 
\[ g_f(z) = \sum_{m=0}^\infty N_m z^m, \]
so that $N_m$ is the number of strings of length $m$ accepted by the finite automaton.  
We will use the notation $[z^m]g_f(z)$ to denote the coefficient $N_m$ on $z^m$.

\begin{lemma}[Transfer Matrix Method \cite{aydin2015}]\label{lem:tmm}
Given a DFA $D = (Q,\Sigma,\delta,q_0,F)$, let $D' = (Q \cup q_F, \Sigma \cup \lambda, \delta', q_0, \{q_F\})$ be obtained from the DFA $D$ by adding a new state $q_F$ and new symbol $\lambda$, along with $\lambda$-transitions from each of the accepting states of $D$ to $q_F$. Let $T$ be the transfer matrix of $D'$, where $T_{ij}$ is the number of transitions from state $i$ to state $j$, with $q_0$ being state $0$ and $q_F$ being state $|Q|$. Then the generating function for $D'$ is
\[g_f(z) = (-1)^{|Q|} \times \frac{\det(I-zT : |Q|, 0)}{z\det(I-zT)}\]
where $(I-zT : |Q|, 0)$ is the minor of index $|Q|,0$, i.e., the matrix $I-zT$ with the $|Q|^{th}$ row and $0^{th}$ column deleted.
\end{lemma}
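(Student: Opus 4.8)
The plan is to reduce the counting of accepted strings to counting walks in the graph underlying $D'$, and then to read the generating function off the resolvent $(I-zT)^{-1}$ via Cramer's rule. The conceptual heart is a bijection. Because $D$ is deterministic, a string of length $m$ accepted by $D$ is the same data as a walk of length $m$ from $q_0$ to some state of $F$; and the construction of $D'$ turns each such walk into a walk of length $m+1$ from $q_0$ to $q_F$ by appending the unique $\lambda$-transition out of the final (accepting) state. Conversely, since $q_F$ has no outgoing edges and is reachable only through a $\lambda$-transition from an accepting state of $D$, every walk of length $m+1$ from $q_0$ to $q_F$ arises this way exactly once. Hence $N_m$, the number of length-$m$ strings accepted by $D$, equals the number of length-$(m+1)$ walks from state $0$ to state $|Q|$ in $D'$.

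First I would record the standard fact that the number of walks of length $\ell$ from state $i$ to state $j$ is the $(i,j)$ entry of $T^\ell$, so that $N_m = (T^{m+1})_{0,|Q|}$. Substituting this into the definition $g_f(z) = \sum_{m \ge 0} N_m z^m$ and re-indexing by $\ell = m+1$ gives
\[ g_f(z) = \frac1z \sum_{\ell=1}^\infty (T^\ell)_{0,|Q|}\, z^\ell. \]
Next I would invoke the matrix geometric series $(I-zT)^{-1} = \sum_{\ell \ge 0} z^\ell T^\ell$, which is valid as a formal power series identity because $\det(I-zT)$ has constant term $1$ and so $I-zT$ is invertible in the ring of formal power series. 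Since the $\ell=0$ term contributes $I_{0,|Q|} = 0$ (as $0 \ne |Q|$), the sum starting at $\ell = 1$ is precisely the $(0,|Q|)$ entry of the resolvent, and therefore
\[ g_f(z) = \frac1z \left( (I-zT)^{-1} \right)_{0,|Q|}. \]

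Finally I would extract this entry using the cofactor (adjugate) formula $(A^{-1})_{ij} = (-1)^{i+j}\det(A:j,i)/\det A$, where $(A:j,i)$ denotes $A$ with row $j$ and column $i$ deleted. Taking $A = I-zT$, $i = 0$, and $j = |Q|$ yields
\[ \left( (I-zT)^{-1} \right)_{0,|Q|} = (-1)^{|Q|}\, \frac{\det(I-zT : |Q|, 0)}{\det(I-zT)}, \]
and combining this with the previous display gives the claimed expression. The only real obstacle is bookkeeping: one must track that the resolvent entry is indexed $(0,|Q|)$ while the cofactor formula transposes this to a minor deleting row $|Q|$ and column $0$, and that the sign $(-1)^{0+|Q|}$ collapses to the stated $(-1)^{|Q|}$. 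The extra factor of $1/z$, which is what distinguishes this from a naive count of closed or endpoint-to-endpoint walks, comes entirely from the off-by-one introduced by the appended $\lambda$-transition.
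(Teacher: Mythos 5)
Your proof is correct. Note that the paper itself does not prove this lemma at all---it is imported as a black box from \cite{aydin2015}---so there is no internal proof to compare against; your argument supplies what the paper delegates to the citation. What you give is the standard argument underlying the transfer matrix method, and it is sound: determinism of $D$ makes accepted strings of length $m$ correspond bijectively to length-$(m+1)$ edge-walks from $q_0$ to $q_F$ in $D'$ (with the multigraph convention that $T_{ij}$ counts symbol-labeled transitions), the resolvent $(I-zT)^{-1}$ collects the walk counts as a formal power series, and the adjugate formula converts the $(0,|Q|)$ entry into the signed minor, with the $1/z$ factor absorbing the off-by-one from the appended $\lambda$-transition. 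You also correctly handle the two easy-to-botch points: that a walk cannot pass through $q_F$ prematurely (it has no outgoing edges), and that the cofactor formula transposes indices so the minor deletes row $|Q|$ and column $0$, giving the sign $(-1)^{0+|Q|} = (-1)^{|Q|}$ exactly as stated.
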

In our case, the number of strings of length $m$ accepted by our finite automaton will be the number of current readouts of length $m = n-k+1$ that can be generated by some DNA strand of length $n$. 

From $g_f(z)$, we can derive the asymptotic behavior of the number of possible current readouts required to determine $C_f$.  As shown in the proof below, it is related to the smallest positive singularity of $g_f(z)$.

\begin{algorithm}[H]\label{alg:Rf}
\SetAlgoLined
\DontPrintSemicolon
 $Q \leftarrow \{A,C,G,T\}^{k-1}$\;
 $\Sigma \leftarrow \{0,1,\dots,b-1\}$\;
 NFA $M \leftarrow (Q, \Sigma, \Delta, Q, Q)$ where $\Delta(s_0 \cdots s_{k-2}, i) = \{s_1 \cdots s_{k-1} \mid f(s_0 s_1 \cdots s_{k-1}) = i\}$\;
 DFA $M' \leftarrow (2^Q, \Sigma, \delta, q_0, F)$ obtained from the subset construction on $M$\;
 $M' \leftarrow (2^Q \cup \{q_F\}, \Sigma \cup \{\lambda\}, \delta', q_0, \{q_F\})$ where \[\delta'(q, \sigma) = \begin{cases} 
      \delta(q, \sigma) & q \in 2^Q \text{ and } \sigma \neq \lambda \\
      q_F & q \in F \text{ and } \sigma = \lambda \\
      \emptyset & \text{otherwise}
   \end{cases}\]\;\vspace{-\baselineskip}
 $T \leftarrow$ transfer matrix of $M'$, i.e., $T_{i,j} =$ number of transitions from state $i$ to state $j$\;
 $g_f(z) \leftarrow (-1)^{2^{|Q|}} \times \frac{\det(I-zT : 2^{|Q|}, 0)}{z\det(I-zT)}$ where $(I-zT : 2^{|Q|}, 0)$ is the minor of index $2^{|Q|},0$\;
 $r \leftarrow$ smallest positive real root of the denominator of $g_f(z)$ (simplified)\;
 return $\log \frac{1}{r}$\;
 \caption{Calculate $C_f$}
\end{algorithm}

\begin{theorem}
Given a mapping $f:\{A,C,G,T\}^k \to \{0,1,\dots,b-1\}$, Algorithm~\ref{alg:Rf} computes $C_f$.
\end{theorem}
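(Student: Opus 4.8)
The plan is to prove that the returned value $\log\frac{1}{r}$ equals $C_f$ by tracking the relevant count through the three stages of the algorithm: the automaton construction, the transfer matrix method, and the singularity analysis of the resulting rational generating function.

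First I would verify that the NFA $M$, and hence its subset-construction DFA $M'$, recognizes exactly the realizable current readouts, with the correct length bookkeeping. A state $s_0\cdots s_{k-2}$ records the $k-1$ most recent bases, and the transition on symbol $i$ to $s_1\cdots s_{k-1}$ is available precisely when appending $s_{k-1}$ yields a $k$-mer with $f(s_0\cdots s_{k-1})=i$. Thus a length-$m$ path beginning at $s_0\cdots s_{k-2}$ spells out $f^*(s_0\cdots s_{m+k-2})$, a readout of length $m$ produced by a DNA strand of length $m+k-1$. Since every state of $M$ is both initial and accepting, the strings of length $m$ accepted by $M'$ are exactly the distinct readouts of length-$(m+k-1)$ strands; that is, the count $N_m=[z^m]g_f(z)$ equals $|\{c\in\Sigma^m : \exists\, s\in\{A,C,G,T\}^{m+k-1},\ f^*(s)=c\}|$. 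Applying Lemma~\ref{lem:tmm} to $M'$ then produces precisely the rational generating function $g_f(z)=\sum_{m\geq 0} N_m z^m$ computed in the algorithm.

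The heart of the argument is the singularity analysis linking the growth rate of $N_m$ to the denominator of $g_f$. Because the transfer matrix has nonnegative integer entries, the coefficients $N_m$ are nonnegative, and since $1\le N_m\le b^m$ the radius of convergence $\rho$ satisfies $0<\rho\leq 1$. By Pringsheim's theorem the point $z=\rho$ is a singularity of $g_f$, and as $g_f$ is rational this singularity is a pole, i.e.\ a root of the denominator lying on the positive real axis. Every pole has modulus at least $\rho$, so after cancelling any common factors between numerator and denominator (the ``simplified'' step, which removes spurious removable singularities), $\rho$ is exactly the smallest positive real root $r$ of the denominator. To upgrade this radius-of-convergence statement to a genuine limit of $\frac{1}{m}\log N_m$ rather than a mere $\limsup$, I would write $N_m$ as the number of length-$m$ walks from $q_0$ to accepting states of $M'$ and invoke the Perron--Frobenius theory, equivalently Gelfand's formula $\lim_m\|A^m\|^{1/m}=\rho(A)$ for the nonnegative transition matrix $A$ of $M'$, using that $q_0$ reaches the maximal strongly connected component, to conclude $\lim_{m\to\infty}\frac{1}{m}\log N_m=\log\rho(A)=\log\frac{1}{r}$.

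Finally I would reconcile the index shift: Equation~\eqref{eq:rf} normalizes by the strand length $n$, whereas $N_m$ counts readouts of length $m=n-k+1$. Since $m/(m+k-1)\to 1$, this gives $C_f=\lim_{n}\frac{\log N_{n-k+1}}{n}=\lim_{m}\frac{\log N_m}{m}=\log\frac{1}{r}$, matching the algorithm's output. I expect the main obstacle to be the singularity-analysis step: one must argue carefully that the smallest positive real root of the \emph{simplified} denominator really is the dominant singularity (neither a cancelled spurious factor nor larger in modulus than some non-real pole), and that the normalized log-count converges rather than oscillates. Pringsheim's theorem settles the former and the Perron--Frobenius/Gelfand estimate for nonnegative matrices settles the latter; the automaton correctness and the index-shift reconciliation are then routine.
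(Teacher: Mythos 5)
Your proof follows the same three-stage route as the paper's: automaton correctness (with the same length bookkeeping $m=n-k+1$), the transfer matrix method via Lemma~\ref{lem:tmm}, and singularity analysis of the rational function $g_f(z)$; in particular, your Pringsheim argument is precisely the content of the Exponential Growth Formula that the paper cites from Flajolet--Sedgewick. The one genuine divergence is how the $\limsup$ is upgraded to a limit. The paper does this in one sentence: the count $N_m$ of realizable readouts is monotonically non-decreasing in $m$, hence the $\limsup$ equals the limit. That inference is not valid for arbitrary non-decreasing sequences (for a monotone $N_m$ that stays flat on long stretches, $N_m^{1/m}$ can oscillate), so the paper's justification is thin, and the fact is true here only for structural reasons. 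You instead invoke Perron--Frobenius/Gelfand for the nonnegative transition matrix, noting that $q_0$ reaches the relevant strongly connected components (to which one should add that every non-empty subset state is accepting, so walk counts from $q_0$ to accepting states track the dominant growth). This is a sound and more careful patch; an even more elementary alternative in the same spirit is to observe that any substring of a realizable readout is realizable, hence $N_{m_1+m_2}\le N_{m_1}N_{m_2}$, and Fekete's lemma applied to $\log N_m$ gives existence of $\lim_{m\to\infty} \frac{1}{m}\log N_m$ directly. Either way, your proposal establishes the theorem, and on this one step it is more rigorous than the paper's own write-up.
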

\begin{proof}
First, observe that the NFA $M$ accepts exactly those current readouts that can be generated by some DNA strand under the mapping $f$. For any current readout accepted by $M$, consider an accepting path $P = s_0\cdots s_{k-2}, s_1\cdots s_{k-1}, \dots, s_{m}\cdots s_{m+k-2}$. Then by construction, the DNA strand $s_0 \cdots s_{m+k-2}$ generates that current readout. In the other direction, for any DNA strand $s_0 \cdots s_{m+k-2}$, the path $P$ is an accepting path for $f^*(s_0 \cdots s_{m+k-2})$.

The DFA obtained from the subset construction accepts the same current readouts as $M$ \cite{rabin1959}. Then we apply the transfer matrix method described in Lemma~\ref{lem:tmm} to obtain the generating function $g_f(z)$, which counts the number of current readouts accepted by $M$.

Finally, we need to extract the asymptotic behavior of $[z^m] g_f(z)$. Since $g_f(z)$ is a generating function with non-negative coefficients, the Exponential Growth Formula \cite{flajolet2009} tells us that $\limsup_{m \to \infty} \left([z^m] g_f(z)\right)^{1/m} = 1/r$ where $r$ is the smallest positive singularity of $g_f(z)$. Note that the number of possible current readouts is monotonically non-decreasing in $m$, so the lim sup is equal to the limit. Therefore, based on equation~(\ref{eq:rf}), and the fact that the length of the current readouts for DNA strands of length $n$ is $m = n-k+1$,
\begin{align*}
C_f &= \lim_{n \to \infty} \frac{\log \left([z^{n-k+1}] g_f(z)\right)}{n}\\
&= \lim_{n \to \infty} \frac{n-k+1}{n} \log \left([z^{n-k+1}] g_f(z)\right)^{1/(n-k+1)}  \\
&= \log \frac{1}{r}
\end{align*}
\end{proof}

\subsection{Complexity of Algorithm~\ref{alg:Rf}}

Unfortunately, the DFA obtained from the subset construction has $2^{4^{k-1}}$ states, so the runtime of Algorithm~\ref{alg:Rf} is exponential in the problem size (i.e., the description length of $f$).

It is possible that calculating $C_f$ may be hard, because computing such a statistic for NFAs \emph{in general} is PSPACE-complete \cite{rampersad2009}. Specifically, even determining whether $C_f = \log b$ for $b=2$ or 4 ($C_f = \log b$ is the highest possible capacity over all mapping functions $f$; see Lemma~\ref{lem:bounds}) is equivalent to determining whether the corresponding NFA is universal (i.e., accepts every string in the alphabet).  We make this precise in the following lemma.

\begin{lemma}\label{lem:univ}
For $b=2$ or $4$ and any $f$, $C_f$ is equal to $\log b$ if and only if every current readout in $\{0,1,\dots,b-1\}^*$ can be generated by some DNA strand.
\end{lemma}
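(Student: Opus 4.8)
The plan is to prove the two directions separately, with the forward direction (universality $\Rightarrow C_f = \log b$) being routine and the reverse direction (the contrapositive: non-universality $\Rightarrow C_f < \log b$) carrying the real content. Throughout, let $L \subseteq \{0,\dots,b-1\}^*$ denote the set of achievable current readouts, i.e.\ the language accepted by the NFA $M$ from Algorithm~\ref{alg:Rf}, and write $N_m = |L \cap \{0,\dots,b-1\}^m|$.

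For the easy direction, suppose every readout is achievable. Then $N_m = b^m$ for every $m$ (there are only $b^m$ words of length $m$, and all are in $L$), so plugging into the definition~(\ref{eq:rf}) gives
\[
C_f = \lim_{n\to\infty}\frac{\log N_{n-k+1}}{n} = \lim_{n\to\infty}\frac{(n-k+1)\log b}{n} = \log b.
\]

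For the reverse direction I would argue the contrapositive, and the key structural observation is that $L$ is \emph{factorial}: every contiguous substring of an achievable readout is itself achievable. Indeed, if $c = f^*(s_0\cdots s_{n-1})$ then any factor $c_a\cdots c_{a+t}$ equals $f^*(s_a\cdots s_{a+t+k-1})$, which is realized by the DNA substrand $s_a\cdots s_{a+t+k-1}$. Now suppose $L$ is not universal, so there is some $w \notin L$; it is nonempty since the empty readout is always achievable. By factoriality, no string of $L$ can contain $w$ as a factor, for otherwise $w$ itself would lie in $L$. Hence $L$ is contained in the set of words over $\{0,\dots,b-1\}$ that avoid $w$ as a factor. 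It then remains to show that avoiding a single fixed nonempty factor strictly reduces the exponential growth rate: writing $\ell = |w|$ and examining the $\lfloor m/\ell\rfloor$ disjoint length-$\ell$ windows of a length-$m$ word, a uniformly random word of length $m$ avoids matching $w$ in all of these windows with probability at most $(1 - b^{-\ell})^{\lfloor m/\ell\rfloor}$, so $N_m \le b^m (1-b^{-\ell})^{\lfloor m/\ell\rfloor}$. Substituting this into the definition of $C_f$ yields
\[
C_f \le \log b + \tfrac{1}{\ell}\log\!\big(1 - b^{-\ell}\big) < \log b,
\]
since $b \ge 2$ forces the correction term to be strictly negative. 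This completes the contrapositive and hence the equivalence.

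I expect the only genuine subtlety to be the factoriality step — specifically, being careful that it is membership of $w$ as a \emph{factor} (not merely as a prefix) that is forced; prefix-closedness alone would only let us exclude the extensions of $w$, a vanishing $b^{-\ell}$ fraction that does not lower the growth rate, whereas forbidding $w$ everywhere removes an exponential fraction. I would also note that the combinatorial heart of the argument uses nothing about $b$ beyond $b \ge 2$, so the restriction to $b \in \{2,4\}$ in the statement is inessential to this proof and merely reflects the cases of interest for the surrounding universality and PSPACE-completeness discussion.
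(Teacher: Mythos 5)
Your proposal is correct and takes essentially the same approach as the paper: the ``if'' direction is identical, and your contrapositive via factoriality and disjoint length-$\ell$ windows is exactly the paper's truncation argument, with your bound $b^m\left(1-b^{-\ell}\right)^{\lfloor m/\ell\rfloor}$ being precisely the paper's count $(b^\ell-1)^{\lfloor m/\ell\rfloor}\, b^{m \bmod \ell}$ written in probabilistic form, yielding the same limit $\log b + \tfrac{1}{\ell}\log(1-b^{-\ell}) = \tfrac{1}{\ell}\log(b^\ell-1) < \log b$. Your closing observations (that factor-avoidance rather than prefix-avoidance is what forces the exponential loss, and that the restriction to $b\in\{2,4\}$ is inessential to the argument) are both accurate.
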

\begin{proof}
First we verify the ``if'' direction.  Indeed,
if every current readout can be generated, then $C_f = \lim_{n \to \infty} \frac{\log b^{n-k+1}}{n} = \log b$.

To show the ``only if'' direction we show the contrapositive. Suppose there is some current readout $c$ of length $\ell$ which cannot be generated by any DNA strand. Observe that any current readout which contains $c$ as a contiguous substring can also not be generated by any DNA strand (otherwise we could generate $c$ by truncation). Thus, consider any current readout of length $n-k+1$ which \emph{can} be generated by some DNA strand. If we divide it up into sections of length $\ell$, obviously none of those sections can be equal to $c$. So we see that
\begin{align*}
C_f &\leq \lim_{n \to \infty} \frac{\log \left((b^\ell -1)^{\lfloor \frac{n-k+1}{\ell} \rfloor} \ b^{(n-k+1) \bmod \ell}\right)}{n}\\
&= \lim_{n \to \infty} \frac{\lfloor \frac{n-k+1}{\ell} \rfloor \log (b^\ell -1)}{n}\\
&= \frac{1}{\ell} \log (b^\ell -1) < \log b.
\end{align*}
\end{proof}

Since universality is PSPACE-complete for general NFAs, an efficient algorithm would have to in some way leverage the highly structured nature of the NFAs corresponding to some mapping $f$.

\begin{remark}
Although we currently don't know how to calculate (or approximate) $C_f$ efficiently, it is possible to approximate the capacity for strands of fixed length $\ell$, $C_f(\ell)$. Specifically, there are existing randomized approximation algorithms for counting the number of strings of a given length accepted by an NFA, in time polynomial in $\ell$ and the size of the NFA \cite{kannan1995,arenas2019}.
\end{remark}

\section{Bounding the Capacity}\label{sec:bound}
The above approach for computing $C_f$ exactly given a mapping $f$ is only practical for small window sizes $k$. However, we can derive some general bounds that apply to any mapping $f$.  In particular, we are interested in the \em worst-case \em mapping $f$ (e.g., the one that attains $\min_f C_f$), as well as the \em best-case \em mapping $f$ (e.g., the one that attains $\max_f C_f$).  We prove the following lemma.

\begin{lemma}\label{lem:bounds}
For a given window size $k$ and with $b$ distinct current levels, we have the following bounds on $C_f$:
\begin{enumerate}
\item $\max_f C_f = \min(\log(b), 2)$\label{item:ub}
\item $\min_f C_f \geq \frac{\log(b)}{k}$
\item $\min_f C_f \leq 1$ when $b \leq 2^k$
\end{enumerate}
\end{lemma}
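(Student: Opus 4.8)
The plan is to treat the three parts separately, establishing each upper bound by a direct counting argument over the achievable readouts and each ``min'' / achievability statement by an explicit construction of a suitable $f$. Throughout I will use the fact (already noted in the proof that Algorithm~\ref{alg:Rf} computes $C_f$) that the number of achievable readouts of length $m=n-k+1$ is monotonically non-decreasing in $n$, so it suffices to lower-bound it along any convenient subsequence of lengths. For Part~1 the upper bound $\max_f C_f \le \min(\log b,2)$ holds for \emph{every} $f$ via two counting bounds: the readout lies in $\{0,1,\dots,b-1\}^{n-k+1}$, so there are at most $b^{n-k+1}$ of them, giving $C_f \le \log b$; and since $f^*$ is a function of the input strand, the number of distinct readouts is at most the number of length-$n$ strands $4^n$, giving $C_f \le 2$. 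For the matching achievability I would exhibit an $f$ whose output depends only on the last base of each window: fix $g:\{A,C,G,T\}\to\{0,\dots,b-1\}$ with image of size $\min(4,b)$ and set $f(s_0\cdots s_{k-1})=g(s_{k-1})$. Then the readout is exactly $g(s_{k-1})g(s_k)\cdots g(s_{n-1})$, there are precisely $\min(4,b)^{\,n-k+1}$ distinct readouts, and hence $C_f=\log\min(4,b)=\min(\log b,2)$.

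For Part~2 I would prove $C_f\ge \frac{\log b}{k}$ for every balanced (hence surjective) $f$ by a block-concatenation argument that controls only every $k$-th readout position. Since $f$ is surjective, for each symbol $i$ fix a $k$-mer $w_i$ with $f(w_i)=i$. Given a message $m_0\cdots m_{L-1}\in\{0,\dots,b-1\}^L$, form the strand $s=w_{m_0}w_{m_1}\cdots w_{m_{L-1}}$ of length $n=kL$. The window beginning at position $kj$ is exactly the $j$-th block $w_{m_j}$, so the readout satisfies $c_{kj}=f(w_{m_j})=m_j$; thus reading off positions $0,k,\dots,k(L-1)$ recovers $m$, distinct messages yield distinct readouts, and there are at least $b^L=b^{n/k}$ of them. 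Taking $n=kL\to\infty$ gives $C_f\ge\frac{\log b}{k}$.

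For Part~3, assuming $b\le 2^k$, I would construct a single $f$ with $C_f\le 1$ by making $f$ factor through a binary projection. Let $\pi:\{A,C,G,T\}\to\{0,1\}$ be two-to-one (say $A,C\mapsto 0$ and $G,T\mapsto 1$), extended coordinatewise to $k$-mers, and pick any $\hat f:\{0,1\}^k\to\{0,\dots,b-1\}$; set $f=\hat f\circ\pi$. Because $b\le 2^k$ we may take $\hat f$ surjective and balanced (each level receiving $2^k/b$ binary $k$-mers), which makes $f$ balanced with $4^k/b$ preimages per level. Every window output $f(s_j\cdots s_{j+k-1})=\hat f(\pi(s_j)\cdots\pi(s_{j+k-1}))$ depends only on $\pi(s)\in\{0,1\}^n$, so $f^*(s)$ is a function of $\pi(s)$ and there are at most $2^n$ distinct readouts, giving $C_f\le 1$ and hence $\min_f C_f\le 1$.

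The routine pieces are the two counting upper bounds and the two explicit ``min'' constructions. The main obstacle is the achievability half of Part~1 together with the requirement that $f$ be balanced: the last-base map attains $\min(\log b,2)$ but is itself balanced only for $b\in\{2,4\}$, so for other values of $b$ I must use the extra freedom in the first $k-1$ coordinates—which do not affect the count of distinct readouts—to even out the preimage sizes. Concretely, I would assign to each of the $4^{k-1}$ prefixes $s_0\cdots s_{k-2}$ an injection of the four last-base extensions into $\{0,\dots,b-1\}$ so that, summed over all prefixes, each level receives exactly $4^k/b$ preimages; fixing a single starting prefix then still decodes the strand from the readout and yields at least $4^{\,n-k+1}$ distinct readouts. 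I expect the existence of this near-regular assignment to follow from a standard Hall/flow argument whenever $b\mid 4^k$, but verifying it is the one step that needs care.
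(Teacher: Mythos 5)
Your proposal is correct and follows essentially the same route as the paper: the same two counting upper bounds for part 1, the same non-overlapping-window argument for part 2, and the same ``collapse $A,C$ and $G,T$ to binary'' construction for part 3, with the only cosmetic difference being that your part-1 achievability map reads the \emph{last} base of the window where the paper's reads the \emph{first}. Your extra care about balancedness (which is genuinely needed, since part 2 fails for non-surjective $f$, and which the paper handles only implicitly) is sound, and the near-regular assignment you flag as the delicate step does exist whenever $4 \mid b$ and $b \mid 4^k$ --- e.g., assign prefix $j$ the injection onto levels $\{4j, 4j+1, 4j+2, 4j+3\} \bmod b$, so each level is hit exactly $4^k/b$ times.
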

\begin{proof} We prove each bound independently:
\begin{enumerate}
\item Observe that the the size of the set in the numerator of equation~(\ref{eq:rf}) is bounded above both by the number of possible current readouts ($b^{n-k+1}$) and the number of DNA strands ($4^n$). Therefore $\max_f C_f \leq \min(\log(b), 2)$.

Furthermore, this rate is achievable. For $b=2$, it is achieved by, for instance,
$$f(s_0 \cdots s_{k-1}) = \begin{cases}
0 &\text{if $s_0 \in \{A,C\}$}\\
1 &\text{otherwise}
\end{cases}$$
since then all possible current readouts are achievable: given $c \in \{0,1,\dots,b-1\}^{n-k+1}$, simply replace all the 0's with $A$'s and the 1's with $G$'s, and add any $k-1$ bases to the end.

For $b \geq 4$, the optimal rate is achieved by any mapping where $f(s_0 \cdots s_{k-1}) = f(t_0 \cdots t_{k-1})$ implies $s_0 = t_0$, since then every DNA strand ending with $k-1$ A's gives rise to a unique current readout.

\item Regardless of the mapping $f$, we can always generate at least $b^{\lfloor n/k \rfloor}$ distinct current readouts: any choice of desired $0^{th}$, $k^{th}$, $2k^{th}$, etc. current readings can be obtained because they correspond to non-overlapping length-$k$ windows. Thus $\min_f C_f \geq \lim_{n \to \infty} \frac{\log b^{\lfloor n/k \rfloor}}{n} = \frac{\log(b)}{k}$.

\item When $b \leq 2^k$, we can effectively make the bases $A$ and $C$ indistinguishable from each other, and likewise the bases $G$ and $T$ indistinguishable from each other. Consider any balanced mapping $f':\{A,G\}^k \to \{0,1,\dots,b-1\}$ --- in this case, balanced means that for each $i$,
\[\left\lvert\{s_0 \cdots s_{k-1} \in \{A,G\}^k \mid f'(s_0 \cdots s_{k-1}) = i\}\right\rvert = 2^k/b.\]
Then define $f(s) = f'(s')$ where $s'$ is obtain from $s$ by replacing all $C$'s with $A$'s and all $T$'s with $G$'s. Observe that $f$ is a balanced mapping, and also the size of the set in the numerator of equation~(\ref{eq:rf}) is now at most $2^n$. Thus $\min_f C_f \leq \lim_{n\to\infty} \frac{\log(2^n)}{n} = 1$.
\end{enumerate}
\end{proof}

Combining these bounds, we obtain the plot in Figure~\ref{fig:p1}.
One might wonder how tight the bounds shown in Figure~\ref{fig:p1} are, and also about where $C_f$ lies for a ``typical'' mapping function $f$.
Using Algorithm~\ref{alg:Rf}, we are able to exactly calculate the maximum, minimum, and average $C_f$ for $k=2$, restricted to balanced mappings. These empirical results are shown in Figure~\ref{fig:p2}.

\begin{figure}
\begin{center}
\begin{minipage}{0.45\textwidth}
\centering
\begin{tikzpicture}
\begin{semilogxaxis}[
    log ticks with fixed point,
    xlabel={b},
    ylabel={bits-per-base},
    xmin=1, xmax=64,
    ymin=0, ymax=2,
    xtick={1,2},
    ytick={0,1,2},
    legend pos=north west,
    extra x ticks={4,8,16,32,64},
    extra x tick labels={\dots,$2^k$,$2^{k+1}$,\dots,$4^k$},
    width=\textwidth,
]
    
\addplot[
	name path=U,
    color=red,
    ]
    coordinates {
    (1,0)(2,1)(4,1)(8,1)(16,2)(32,2)(64,2)
    };
    
\addplot[
	name path=L,
    color=red,
    ]
    coordinates {
    (1,0)(64,2)
    };
    
\addplot [red!30] fill between [
of=U and L,
];

\addplot[
    color=blue,
    thick,
    dashed,
    ]
    coordinates {
    (1,0)(2,1)(4,2)(8,2)(16,2)(32,2)(64,2)
    };
\end{semilogxaxis}
\end{tikzpicture}

 \caption{Bounds on the value of $C_f$.  The dashed blue line is equal to $\max_f C_f$.
The value of $\min_f C_f$ must lie somewhere in the red shaded area.} \label{fig:p1}
\end{minipage}\hfill
\begin{minipage}{0.45\textwidth}
\centering
\begin{tikzpicture}
\begin{semilogxaxis}[
    log ticks with fixed point,
    xlabel={b},
    ylabel={bits-per-base},
    xmin=1, xmax=16,
    ymin=0, ymax=2,
    xtick={1,2,4,8,16},
    ytick={0,1,2},
    legend pos=north west,
    width=\textwidth,
]
    
\addplot[
	name path=U,
    color=red,
    ]
    coordinates {
    (1,0)(2,1)(4,1)(8,2)(16,2)
    };
    
\addplot[
	name path=L,
    color=red,
    ]
    coordinates {
    (1,0)(16,2)
    };
    
\addplot [red!30] fill between [
of=U and L,
];

\addplot[
    color=red,
    thick,
    ]
    coordinates {
    (1,0)(2,0.961914)(4,1)(8,1.771553)(16,2)
    };

\addplot[
    color=blue,
    thick,
    dashed,
    ]
    coordinates {
    (1,0)(2,1)(4,2)(8,2)(16,2)
    };
    
\addplot[
    color=green,
    thick,
    dotted,
    ]
    coordinates {
    (1,0)(2,0.999929)(4,1.837082)(8,1.989641)(16,2)
    };
    
\end{semilogxaxis}
\end{tikzpicture}
\caption{For $k=2$, the true $\max_f C_f$ (dashed blue), $\min_f C_f$ (solid red), and $\mathbb{E}_f C_f$ (dotted green), superimposed over our bounds from Figure~\ref{fig:p1}.}\label{fig:p2}
\end{minipage}
\end{center}
\end{figure}

The lower bound is on $\min_f C_f$ curious. Our theoretical bounds on $\min_f C_f$ are not tight, but the $k=2$ results suggest there may be some ``bumpiness'' in the true bound.
Also based on the $k=2$ results, it appears that random (balanced) mappings are closer to the best-case scenario than the worst-case. Section~\ref{sec:greedy} illustrates one coding scheme that takes advantage of a property shared by most random mappings for $b=2$, and we may hope there exist more general coding schemes that perform well for random mappings.

\section{Coding Schemes}\label{sec:code}
In the proof of Lemma~\ref{lem:bounds}, we observed that given a mapping $f$ with window size $k$ and $b$ distinct current levels, we can achieve a rate of $\frac{\log(b)}{k}$ by coding only on the $0^{th}, k^{th}, 2k^{th},$ etc. current readings. Here, we propose two generalizations of this trivial scheme to improve the rate with additional preprocessing based on the mapping $f$.

\subsection{``Block'' Encoding}
Instead of coding on non-overlapping windows of length $k$, each of which maps to one of $b$ current readings, we may instead choose a block length $\ell \geq k$ and code on non-overlapping blocks of length $\ell$. This requires precomputing an alphabet $\Sigma(\ell) \subseteq \{A,C,G,T\}^\ell$ such that $f^*\restriction_{\Sigma(\ell)}$ is an injective function with the same range as $f^*\restriction_{\{A,C,G,T\}^\ell}$. That is, each DNA strand in $\Sigma(\ell)$ generates a unique current readout, and together they generate every possible current readout of length $\ell-k+1$ given the mapping $f$. Provided our desired strand length $n$ is divisible by $\ell$ or tends to infinity, this coding scheme obtains the rate
\[C_f(\ell) = \frac{\log{|\{ c \in \{0,1,\dots,b-1\}^{\ell-k+1} \mid \exists s \in \{A,C,G,T\}^\ell \text{ s.t. } f^*(s)=c \}|}}{\ell} = \frac{\log |\Sigma(\ell)| }{\ell}\]
Since $\lim_{\ell \to \infty} C_f(\ell) = C_f$, it is possible to get arbitrarily close to the optimal rate by picking a sufficiently large $\ell$.

\begin{theorem}
Given a mapping $f$, for all $\epsilon > 0$, there is a coding scheme achieving rate $C_f - \epsilon$ with linear time encoding and decoding that requires preprocessing time $O(\frac{1}{\epsilon} \cdot 4^{1/\epsilon})$, where the $O(\cdot)$ notation hides constants that may depend on the mapping $f$.
\end{theorem}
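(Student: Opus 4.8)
The plan is to formalize the block-encoding scheme sketched just before the statement, choosing the block length $\ell$ as an explicit function of $\epsilon$ and then verifying the rate, the running times, and the preprocessing cost separately. First I would fix $\ell$ to be the smallest integer for which $C_f(\ell) \geq C_f - \epsilon$, and precompute the block alphabet $\Sigma(\ell)$ by brute force: enumerate all $4^\ell$ strands $s \in \{A,C,G,T\}^\ell$, compute $f^*(s)$ in $O(\ell)$ time each, and retain one representative strand for each distinct current readout. This produces an alphabet with $|\Sigma(\ell)| = N_{\ell-k+1}$, where $N_m = [z^m] g_f(z)$ is the number of achievable readouts of length $m$, and it simultaneously builds (i) a hash table sending each achievable readout to its representative block, for decoding, and (ii) a bijection between $\Sigma(\ell)$ and $\{0,\dots,|\Sigma(\ell)|-1\}$, for encoding. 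The total preprocessing cost is $O(\ell \cdot 4^\ell)$.

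To encode, I would partition the message into groups of $\lfloor \log_2 |\Sigma(\ell)| \rfloor$ bits, send each group through the bijection to a block of $\Sigma(\ell)$, and concatenate the $B$ resulting blocks; since any string over $\{A,C,G,T\}$ is a valid channel input, no junction constraints arise. Decoding relies on the structural observation that the blocks do not interfere at the level of recovery: in the length-$(B\ell-k+1)$ readout of the concatenated strand, the windows lying entirely inside block $i$ occupy readout indices $i\ell,\dots,i\ell+\ell-k$ and spell out exactly $f^*(s^{(i)})$, while the $k-1$ windows straddling each junction are simply discarded (the bookkeeping checks out, since $B(\ell-k+1)+(B-1)(k-1)=B\ell-k+1$). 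Because $f^*\restriction_{\Sigma(\ell)}$ is injective, a single hash lookup recovers each block and hence its bit-group, so both encoding and decoding run in $O(n)$ time.

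The rate of the scheme is $\lfloor \log_2|\Sigma(\ell)|\rfloor/\ell \geq C_f(\ell) - 1/\ell$, so the crux is the quantitative claim that $\ell = \Theta(1/\epsilon)$ already forces $C_f(\ell) \geq C_f - \epsilon$; establishing this convergence rate is the main obstacle, since the preceding discussion asserts only the qualitative limit $\lim_{\ell\to\infty} C_f(\ell) = C_f$. I would prove it through submultiplicativity of the counts: every achievable readout of length $m+m'$ restricts to an achievable length-$m$ prefix and an achievable length-$m'$ suffix, and this restriction map is injective (the prefix and suffix together determine the readout), giving $N_{m+m'} \leq N_m N_{m'}$. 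By Fekete's lemma $\tfrac{1}{m}\log N_m$ decreases to its infimum $C_f$, so $\log N_{\ell-k+1} \geq C_f(\ell-k+1)$ and therefore $C_f(\ell) \geq C_f\bigl(1 - \tfrac{k-1}{\ell}\bigr)$. Using $C_f \leq 2$ from Lemma~\ref{lem:bounds}, the gap $C_f - C_f(\ell)$ is at most $2(k-1)/\ell$, so it suffices to take $\ell = \lceil 2(k-1)/\epsilon\rceil$, enlarged by a constant factor to also absorb the $1/\ell$ rounding loss into $\epsilon$. Substituting this $\ell$ into the preprocessing bound gives $O(\ell\cdot 4^\ell) = O\bigl(\tfrac{1}{\epsilon}\, 4^{O(1/\epsilon)}\bigr)$, matching the claimed $O\bigl(\tfrac{1}{\epsilon}\, 4^{1/\epsilon}\bigr)$ once the $k$-dependent (hence $f$-dependent) constants are folded into the $O(\cdot)$.
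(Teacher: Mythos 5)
Your proposal is correct, and the coding scheme itself coincides with the paper's: brute-force enumeration of all $4^\ell$ strands to build a representative alphabet $\Sigma(\ell)$ plus a hash table (preprocessing $O(\ell \cdot 4^\ell)$), block-wise encoding, and decoding by discarding the $k-1$ readings straddling each junction. Where you genuinely diverge is the key quantitative step: bounding how fast $C_f(\ell)$ approaches $C_f$, so that $\ell = \Theta(1/\epsilon)$ suffices. The paper invokes generating-function asymptotics for regular languages (Theorem V.3 of Flajolet--Sedgewick) to get $|\Sigma(\ell)| = \Theta\bigl(\Pi(\ell-k+1)(2^{C_f})^{\ell-k+1}\bigr)$, and hence non-explicit constants $\ell_0$ and $C$ with $|\Sigma(\ell)| \geq C(2^{C_f})^{\ell-k+1}$ for $\ell \geq \ell_0$. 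You instead prove submultiplicativity $N_{m+m'} \leq N_m N_{m'}$ (an achievable readout is determined by its achievable prefix and suffix), so $\log N_m$ is subadditive and Fekete's lemma gives $\tfrac{1}{m}\log N_m \geq \inf_m \tfrac{1}{m}\log N_m = C_f$ for \emph{every} $m$; this yields $C_f(\ell) \geq C_f\bigl(1-\tfrac{k-1}{\ell}\bigr)$ and an explicit choice $\ell = O(k/\epsilon)$, with the $1/\ell$ bit-rounding loss absorbed as you describe. Your route is more elementary and self-contained, replaces the paper's existential constants with explicit ones, and as a side benefit establishes that the limit defining $C_f$ exists (it equals the infimum), which the paper's definition takes for granted; the paper's route gives finer asymptotic information (the polynomial factor $\Pi$), but that precision is not needed here. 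One caveat you share with the paper: since $\ell$ scales like $c_f/\epsilon$ with $c_f$ depending on $f$ (through $k$), the preprocessing is really $4^{\Theta_f(1/\epsilon)}$, so the ``constant depending on $f$'' sits in the exponent rather than multiplicatively; you flag this explicitly, whereas the paper performs the same folding silently. (Minor wording nit: Fekete's lemma gives convergence to the infimum, not that the sequence is monotone ``decreasing,'' but the only consequence you use---each term is at least the infimum---is exactly what the lemma provides.)
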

\begin{proof}
We will exhibit such a scheme by choosing an appropriate block length $\ell$. Let $|\Sigma(\ell)|$ be the number of distinct current readouts that can be generated from DNA strands of length $\ell$. Equivalently, $|\Sigma(\ell)| = [z^{\ell-k+1}] g_f(z)$ is the number of current readouts of length $\ell-k+1$ accepted by the NFA $M$ constructed by Algorithm~\ref{alg:Rf}. Because $g_f(z)$ is a counting function for a regular language, and because $[z^{\ell-k+1}] g_f(z)$ is non-decreasing in $\ell$, the asymptotic behavior of $[z^{\ell-k+1}] g_f(z)$ has a simple form (\cite{flajolet2009}, Theorem V.3):
\[|\Sigma(\ell)| = [z^{\ell-k+1}] g_f(z) = \Theta(\Pi(\ell-k+1) (2^{C_f})^{\ell-k+1})\]
where $\Pi(x)$ is a polynomial.

Thus, there must exist some $\ell_0$ and some constant $C$ depending only on the mapping $f$, such that for all $\ell \geq \ell_0$, $|\Sigma(\ell)| \geq C (2^{C_f})^{\ell -k +1}$. Therefore, if we choose $\ell \geq \max\left(\ell_0, \frac{C_f \cdot (k-1) - \log C}{\epsilon}\right)$, we see that
\begin{align*}
C_f(\ell) &= \frac{\log |\Sigma(\ell)|}{\ell} \\
&\geq \frac{\log \left( C (2^{C_f})^{\ell-k+1} \right)}{\ell} \\
&= \frac{\log C + C_f \cdot (\ell-k+1)}{\ell}\\
&= C_f - \frac{C_f \cdot (k-1) - \log C}{\ell}\\
&\geq C_f - \epsilon.
\end{align*}
Therefore, to achieve rate $C_f - \epsilon$, we should choose $\ell$ proportional to $1/\epsilon$, with the constants depending only on the mapping $f$.

Given the block length $\ell$, we now describe how to compute the alphabet $\Sigma(\ell)$. We will construct an array $E$ containing the alphabet $\Sigma(\ell)$ and a hash table $D$ mapping current readouts of length $\ell-k+1$ to the index in $E$ of the DNA strand that generates that readout.

For each DNA strand $s$ of length $\ell$, compute $f^*(s)$. If $f^*(s)$ has not yet been added to the hash table $D$, append $s$ to the end of array $E$ and map $f^*(s)$ to the appropriate index. This preprocessing takes $O(\ell)$ time for each of $4^\ell$ DNA strands, for a total of $O(\ell \cdot 4^\ell) = O(\frac{1}{\epsilon} \cdot 4^{1/\epsilon})$.

Encoding and decoding are then quite straightforward: Convert the message to base $|\Sigma(\ell)|$ and use lookup table $E$ to map each digit to a block of length $\ell$. Similarly, decode each block of $\ell-k+1$ current readings using the hash table $D$ (skipping the $k-1$ readings that straddle each pair of adjacent blocks).
\end{proof}

As an example, consider the case when $b = 2$ or $4$ and $f$ is any mapping with the highest-possible capacity $C_f = \log b$. Per Lemma~\ref{lem:univ}, this implies that every current readout can be generated by some DNA strand, so $|\Sigma(\ell)| = b^{\ell-k+1}$ and $C_f(\ell) = \frac{(\ell-k+1)\log(b)}{\ell}$. In this case, we can calculate the dependence of $\ell$ on $\epsilon$ exactly:
\begin{align*}
C_f - \eps &= \frac{(\ell-k+1)\log(b)}{\ell} \\
- \eps &= \frac{(-k+1)\log(b)}{\ell}\\
\ell &= \frac{(k-1)\log(b)}{\eps}.
\end{align*}
For instance, when $k=2, b=2,$ and $\eps=0.1$, we would require $\ell=10$.

\subsection{``Greedy'' Encoding}\label{sec:greedy}
Alternatively, instead of changing the lengths of the blocks used in the trivial scheme, we may relax the ``non-overlapping'' requirement. 

This may not always be possible, depending on the mapping $f$. Indeed, in the worst case, it is possible that once you have fixed the first $k$ bases, the next $k-1$ current readings may also be fixed---for instance, if the first $k$ bases are all $A$, and all windows starting with $A$ map to the same current level. However, this pessimal case shouldn't happen for most mappings.

Consider the case of $b=2$ current levels, and suppose that for some length $1 \leq \ell < k$, for every ``prefix'' $p \in \{A,C,G,T\}^\ell$, at least one window in $f^{-1}(0)$ \emph{and} at least one window in $f^{-1}(1)$ starts with that prefix. Then it is possible to have every $(k-\ell)^{th}$ current reading code for one binary symbol independently. This would give us a rate of $\frac{1}{k-\ell}$ rather than $\frac{1}{k}$. Such an event is not too unlikely with a random mapping $f$.

\begin{lemma}
Given a random mapping $f$ with $b=2$ distinct current levels and a length $1 \leq \ell < k$, $f$ admits the coding scheme described above with rate $\frac{1}{k-\ell}$ with probability at least
\[1 - 4^\ell \cdot 2 \cdot \left(\frac{1}{2}\right)^{4^{k-\ell}}.\]
Furthermore,
\begin{enumerate}
\item we can determine whether such a scheme exists for a given $f$ and $\ell$ in $O(4^k)$.
\item we can find the maximum $\ell$ for which such a scheme exists for a given $f$ in $O(k 4^k)$.
\item we can implement such a scheme with $O(k 4^k)$ preprocessing and linear encoding and decoding.
\end{enumerate}
\end{lemma}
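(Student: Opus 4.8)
The plan is to treat the three parts in turn, but first to isolate the combinatorial property that makes the whole scheme run. Call a prefix $p \in \{A,C,G,T\}^\ell$ \emph{good} if, among the $4^{k-\ell}$ windows of the form $p \circ w$ with $w \in \{A,C,G,T\}^{k-\ell}$, at least one maps to $0$ and at least one maps to $1$ under $f$; call $f$ itself \emph{$\ell$-good} if every prefix of length $\ell$ is good. This is exactly the hypothesis stated informally before the lemma. My first step is to verify that $\ell$-goodness suffices to achieve rate $\frac{1}{k-\ell}$. The encoder places one message bit at each current-reading position $0, (k-\ell), 2(k-\ell), \dots$, and consecutive such windows overlap in exactly $\ell$ bases. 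Building the strand left to right, at each step the $\ell$ overlap bases are already committed and form a prefix $p$; $\ell$-goodness then guarantees we can append $k-\ell$ fresh bases so that the new window maps to whichever bit we want. A one-line index check shows the freshly appended bases lie strictly to the right of every previously coded window (the fresh bases for the $m$-th bit start at position $(m-1)(k-\ell)+k$, while the $(m-1)$-th coded window ends at $(m-1)(k-\ell)+k-1$), so no already-committed coded reading is disturbed; the intermediate readings are simply ignored. After $B$ bits the strand has length $k + (B-1)(k-\ell)$, so the rate tends to $\frac{1}{k-\ell}$, and decoding reduces to reading off the output symbols at positions $0, (k-\ell), 2(k-\ell), \dots$ with no table lookup at all.

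Next I would prove the probability bound. The relevant random model (the one that produces the clean stated bound, as opposed to the balanced model used elsewhere) is to let each of the $4^k$ $k$-mers map independently to a uniformly random element of $\{0,1\}$. For a fixed prefix $p$, its $4^{k-\ell}$ extensions carry independent fair bits, so the probability they are all $0$ or all $1$ is $2\cdot\left(\frac{1}{2}\right)^{4^{k-\ell}}$; hence $p$ is bad with at most this probability. A union bound over the $4^\ell$ prefixes shows the failure probability is at most $4^\ell \cdot 2 \cdot \left(\frac{1}{2}\right)^{4^{k-\ell}}$, which gives the claimed lower bound on $\Pr[f \text{ is } \ell\text{-good}]$.

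For the three algorithmic claims I would argue as follows. For (a), sweep once over all $4^k$ $k$-mers; for each, extract its length-$\ell$ prefix and its $f$-value, and record in two boolean arrays (indexed by prefix) whether value $0$, respectively $1$, has been seen for that prefix. A final check that both arrays are all-true decides $\ell$-goodness in $O(4^k)$ time and $O(4^\ell)$ space. For (b), running (a) for each $\ell \in \{1,\dots,k-1\}$ gives $O(k\,4^k)$. I would also observe that the property is monotone: $\ell$-goodness implies $(\ell-1)$-goodness, because shortening a good prefix only enlarges its set of window extensions, so the valid $\ell$ form an initial segment $\{1,\dots,\ell^\ast\}$ and the rate $\frac{1}{k-\ell}$ is maximized at the largest valid $\ell$; this justifies (b) and moreover shows one could binary search, or do a single bottom-up pass over the $k$-mer trie computing each prefix's value-set in total $O(4^k)$. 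For (c), the same sweep fills a table indexed by $(p,d)$ that stores, for each prefix $p$ and bit $d$, a choice of $k-\ell$ bases realizing $f(p\circ w)=d$ (guaranteed to exist by $\ell$-goodness), in $O(k\,4^k)$ preprocessing; encoding is then one table lookup plus an $O(1)$ prefix update per bit, and decoding is the direct read-off above, both linear in the strand length.

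I expect no step to be genuinely deep; the real care lies in two places. First, pinning down the random model so that the bound $2\left(\frac{1}{2}\right)^{4^{k-\ell}}$ emerges cleanly — this is the independent-uniform model rather than the balanced model assumed in earlier sections, and stating this explicitly is important. Second, the bookkeeping in the greedy construction: one must check that appending fresh bases never alters a previously committed coded reading (so the greedy choices never need to be revisited) and that, consequently, decoding collapses to sampling every $(k-\ell)$-th output symbol. These are the main obstacles, in the sense that getting the index arithmetic exactly right is where a careless proof could go wrong.
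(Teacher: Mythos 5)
Your proposal is correct and follows essentially the same route as the paper's proof: the same independent-uniform random model with a union bound over the $4^\ell$ prefixes and two current levels, the same $O(4^k)$ sweep (repeated over $\ell$ for the maximum), and the same greedy per-bit extension scheme, with your $(p,d)$-indexed completion table playing the role of the paper's tries with leaf pointers. Your added verifications (the index arithmetic showing committed readings are never disturbed, and the monotonicity of $\ell$-goodness) are sound refinements of details the paper leaves implicit, but they do not change the approach.
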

\begin{proof}
Let $f$ be a uniformly random mapping from $\{A,C,G,T\}^k \to \{0,1\}$ (not necessarily balanced). For a given prefix $p$ of length $\ell$ and current level $\beta$, the probability that all $4^{k-\ell}$ windows beginning with that prefix belong to $f^{-1}(\beta)$ is $\left(\frac{1}{2}\right)^{4^{k-\ell}}$. Thus by the union bound, the probability of being able to execute this scheme for a given length $\ell$ is at least
\[1 - 4^\ell \cdot 2 \cdot \left(\frac{1}{2}\right)^{4^{k-\ell}}.\]
This probability would only be higher if we restrict $f$ to be a random balanced mapping.

Given a specific mapping $f$ and desired length $\ell$, we can trivially check whether this property is satisfied by checking, for each prefix of length $\ell$, the current level corresponding to each of the $4^{k-\ell}$ windows beginning with that prefix, in time $O(4^k)$.

Naturally, we could determine the maximum $\ell$ that works for a given mapping $f$ in $O(k 4^k)$ by repeating the above procedure for each $\ell \in \{1,2,\dots,k-1\}$.

Furthermore, such a scheme could be straightforwardly implemented for a given $f$. For instance, we could construct tries --- trie 0 corresponding to $f^{-1}(0)$ and trie 1 corresponding to $f^{-1}(1)$ --- in $O(k 4^k)$. For ease of encoding, we will also equip each leaf node $s_0 s_1 \cdots s_{k-1}$ of the tries with two pointers, pointing to the internal nodes $s_{k-\ell} s_{k-\ell+1} \cdots s_{k-1}$ in both trie 0 and trie 1. To encode from binary to a DNA strand, we then start with any length-$k$ window in the trie corresponding to the first bit. For each subsequent bit, follow the pointer from our current leaf to the internal node of the trie corresponding to that bit, then append an arbitrary path from that prefix to a leaf in the appropriate trie. Decoding is simply a matter of extracting every $(k-\ell)^{th}$ current reading. Thus both encoding and decoding are linear in the length of the DNA strand.
\end{proof}

For example, with $k=6, \ell=4$, we see that we can obtain a rate of $1/2$ (compared to the trivial $1/6$) with probability at least $1 - \frac{1}{128}$.

\section{Conclusion}
We have initiated a theoretical study of coding for a highly abstracted version of the nanopore sequencer for DNA storage.   We have provided algorithms and bounds for understanding the capacity, and we have given efficient coding schemes.  However, we view our work as the tip of an iceberg.  First, even for this abstracted model, much remains open.  Can one derive better bounds on the capacity, or compute it efficiently for, say, $k=6$?  Second, we hope that our insights will generalize to more practical models, including with substitution and synchronization errors. 


\printbibliography


\appendix

\end{document}